\def\BibTeX{{\rm B\kern-.05em{\sc i\kern-.025em b}\kern-.08em
    T\kern-.1667em\lower.7ex\hbox{E}\kern-.125emX}}
\pgfplotsset{
compat=1.3,
legend style={font=\footnotesize, fill opacity=0.7,  draw opacity=1, text opacity=1, draw=white!15!black, legend cell align=left, align=left}, 
width=6cm, 
height=6cm,
yminorticks=false,
xminorticks=false,
title style={font=\small},
tick style={color=black},
tick label style={font=\small},
grid style={line width=.1pt, draw=gray!20},
major grid style={line width=.1pt,draw=gray!20},
}
\pgfplotsset{every tick label/.append style={font=\footnotesize}}
\newacronym{iot}{IoT}{Internet of Things}
\newacronym{ml}{ML}{machine learning}
\newacronym{dl}{DL}{Deep Learning}
\newacronym{marl}{MARL}{multi-agent reinforcement learning}
\newacronym{rl}{RL}{reinforcement learning}
\newacronym{decpomdp}{Dec-POMDP}{Decentralized Partially Observable Markov Decision Process }
\newacronym{acnomdp}{ACNO-MDP}{Action-Contingent Noiselessly Observable MDP}
\newacronym{pomdp}{POMDP}{Partially Observable Markov Decision Process}
\newacronym{lidar}{LIDAR}{Laser Imaging Detection and Ranging}
\newacronym{uav}{UAV}{Unmanned Aerial Vehicle}
\newacronym{dqn}{DQN}{Deep Q-Network}
\newacronym{dnn}{DNN}{deep neural network}
\newacronym{dial}{DIAL}{Differentiable Inter-Agent Learning}
\newacronym{api}{API}{Alternate Policy Iteration}
\newacronym{mdp}{MDP}{Markov Decision Process}
\newacronym{fov}{FoV}{Field of View}
\newacronym{cnn}{CNN}{Convolutional Neural Network}
\newacronym{nn}{NN}{neural network}
\newacronym{ddql}{DDQL}{Distributed Deep Q-Learning}
\newacronym{pdf}{PDF}{Probability Density Function}
\newacronym{ndpomdp}{ND-POMDP}{Networked Distributed Partially Observable Markov Decision Process}
\newacronym{ncs}{NCS}{Networked Control System}
\newacronym{cps}{CPS}{Cyber-Physical System}
\newacronym{radam}{RAdam}{Rectified Adam}
\newacronym{cdf}{CDF}{cumulative distribution function}
\newacronym{mpc}{MPC}{Model Predictive Control}
\newacronym{rv}{rv}{Random Variable}
\newacronym{qoe}{QoE}{Quality of Experience}
\newacronym{tlc}{TLC}{Telecommunications}
\newacronym{cml}{CML}{communications for machine learning}
\newacronym{mlc}{MLC}{machine learning for communications}
\newacronym{drl}{DRL}{Deep Reinforcement Learning}
\newacronym{rf}{RF}{Radio Frequency}
\newacronym{urllc}{URLLC}{Ultra-Reliable and Low-Latency Communications}
\newacronym{fl}{FL}{federated learning}
\newacronym{kpi}{KPI}{Key Performance Indicators}
\newacronym{mec}{MEC}{Mobile Edge Computing}
\newacronym{ei}{EI}{Edge Intelligence}
\newacronym{bs}{BS}{base station}
\newacronym{sdn}{SDN}{Software Defined Networking}
\newacronym{mimo}{MIMO}{Multiple-Input Multiple-Output}
\newacronym{gp}{GP}{Gaussian Process}
\newacronym{iiot}{IIoT}{Industrial Internet of Things}
\newacronym{csi}{CSI}{Channel State Information}
\newacronym{sgd}{SGD}{Stochastic Gradient Descent}
\newacronym{iid}{i.i.d.}{independent and identically distributed}
\newacronym{ofdm}{OFDM}{Orthogonal Frequency Division Multiplexing}
\newacronym{los}{LOS}{Line-of-Sight}
\newacronym{nlos}{NLOS}{Non-Line-of-Sight}
\newacronym{snr}{SNR}{Signal to Noise Ratio}
\newacronym{rb}{RB}{Resource Block}
\newacronym{6g}{6G}{sixth generation}
\newacronym{ai}{AI}{artificial intelligence}
\newacronym{sfl}{SFL}{Synchronous Federated Learning}
\newacronym{frfl}{FRFL}{Fixed Rate Federated Learning}
\newacronym{pgm}{PGM}{Probabilistic Graphical Model}
\newacronym{hmm}{HMM}{Hidden Markov Model}
\newacronym{elbo}{ELBO}{Evidence Lower Bound}
\newacronym{pmf}{PMF}{Probability Mass Function}
\newacronym{smab}{SMAB}{Stochastic Multi-Armed Bandit}
\newacronym{mab}{MAB}{multi-armed bandit}
\newacronym{mc}{MC}{Monte Carlo}
\newacronym{is}{IS}{Importance Sampling}
\newacronym{dms}{DMS}{discrete memoryless source}
\newacronym{ucb}{UCB}{upper confidence bound}
\newacronym{ser}{SER}{Symbol Error Rate}
\newacronym{sc}{SC}{Semantic Communications}
\newacronym{voi}{VoI}{Value of Information}
\newacronym{nlp}{NLP}{natural language processing}
\newacronym{ts}{TS}{Thompson Sampling}
\newacronym{cmab}{CMAB}{contextual multi-armed bandit}
\newacronym{rccmab}{RC-CMAB}{rate-constrained \gls{cmab}}
\newacronym{rcmab}{R-CMAB}{remote \gls{cmab}}
\newacronym{ib}{IB}{information bottleneck}
\newacronym{merl}{MERL}{maximum entropy reinforcement learning}
\newacronym{pac}{PAC}{Probably Approximately Correct}
\newacronym{aoi}{AoI}{Age of Information}
\newacronym{aoii}{AoII}{Age of Incorrect Information}
\newacronym{uoi}{UoI}{Urgency of Information}
\newacronym{vqvae}{VQ-VAE}{Vector Quantized Variational Autoencoder}
\newacronym{vae}{VAE}{Variational Autoencoder}
\newacronym{psnr}{PSNR}{Peak Signal to Noise Ratio}
\newacronym{mse}{MSE}{Mean Square Error}
\newacronym{lstm}{LSTM}{Long Short-Term Memory}
\newacronym{iomdp}{IOMDP}{Intermittently Observable Markov Decision Process}
\newacronym{pi}{PI}{Policy Iteration}
\newacronym{iql}{IQL}{Intermittent Q-Learning}
\newacronym{nc}{NC}{Network Core}
\newacronym{pql}{PQL}{Peekaboo Q-Learning}
\newacronym{eff_com}{EC}{Effective Communication}
\newacronym{ibr}{IBR}{Iterated Best Response}
\newacronym{ne}{NE}{Nash Equilibrium}
\newcommand{\E}[1]{\mathbb{E}\left[ #1 \right]} 
\newcommand{\mc}[1]{\mathcal{#1}}   
\newcommand{\mb}[1]{\mathbf{#1}}    
\DeclareMathOperator*{\argmax}{arg\,max}    
\newcommand{\set}[1]{\mathcal{#1}}
\def \sfwidth{0.23\linewidth}
\def \sfheight{0.18\linewidth}
\definecolor{color4}{HTML}{FFD700}
\definecolor{color3}{HTML}{EA5F94}
\definecolor{color2}{HTML}{CD34B5}
\definecolor{color1}{HTML}{9D02D7}
\definecolor{color0}{HTML}{0000FF}
\definecolor{darkblue}{HTML}{00429D}
\definecolor{darkgreen}{HTML}{005c00}
\definecolor{gold}{HTML}{D4AF37}
\definecolor{darkred}{HTML}{910000}
\definecolor{darkslategray38}{RGB}{38,38,38}
\newtheorem{theorem}{Theorem}
\newtheorem{definition}{Definition}
\newcommand*\titleheader[1]{\gdef\@titleheader{#1}}
  \let\st@red@title\@title
  \def\@title{%
    \bgroup\normalfont\large\centering\@titleheader\par\egroup
    \vskip1.5em\st@red@title}
\title{Push- and Pull-based Effective Communication \\ in Cyber-Physical Systems}
\author{\IEEEauthorblockN{Pietro Talli, Federico Mason, Federico Chiariotti, Andrea Zanella}
\IEEEauthorblockA{Department of Information Engineering, University of Padova, Via G. Gradenigo 6/B, 35131, Padua, Italy \\
Emails: pietro.talli@phd.unipd.it, \{federico.mason, federico.chiariotti, andrea.zanella\}@unipd.it\vspace{-0.7cm}}
\thanks{This project was funded under the National Recovery and Resilience Plan (NRRP), Mission 4 Component 2 Investment 1.3 - Call for tender No. 341 (15 March 2022) of the Italian Ministry of University and Research, funded by the European Union NextGenerationEU Project. MUR grant number: PE\_0000001, Concession Decree 1549 (11/10/2022) adopted by the Italian Ministry of University and Research, CUP C93C22005250001, project title: RESearch and innovation on future Telecommunications systems and networks, to make Italy more smART (RESTART). F. Chiariotti's activities are funded by NRRP ``Young Researchers'' grant REDIAL (SoE0000009).}
}
\begin{document}

\maketitle

\begin{abstract}
    In \glspl{cps}, two groups of actors interact toward the maximization of system performance: the sensors, observing and disseminating the system state, and the actuators, performing physical decisions based on the received information. While it is generally assumed that sensors periodically transmit updates, returning the feedback signal only when necessary, and consequently adapting the physical decisions to the communication policy, can significantly improve the efficiency of the system.
    In particular, the choice between push-based communication, in which updates are initiated autonomously by the sensors, and pull-based communication, in which they are requested by the actuators, is a key design step.
    In this work, we propose an analytical model for optimizing push- and pull-based communication in \glspl{cps}, observing that the policy optimality coincides with \gls{voi} maximization.
    Our results also highlight that, despite providing a better optimal solution, implementable push-based communication strategies may underperform even in relatively simple scenarios.    
\end{abstract}

\begin{IEEEkeywords}
Effective communication, Value of Information, Pull-based communication, Cyber-Physical Systems
\end{IEEEkeywords}

\glsresetall

\section{Introduction}
\label{sec:intro}

The Industry 4.0 revolution has made \glspl{cps} a fundamental pillar for multiple applications, including area surveillance, data muling, factory automation, teleoperation, and autonomous mobility~\cite{zhang2019networked}.
A \gls{cps} is composed of sensors, which collect information about the system state, and actuators, that can physically modify it and control it.
In standard scenarios, it is assumed that sensor transmissions take place periodically, with the goal of minimizing the \gls{aoi} at the receiver side. Such an approach ensures that, on average, the actuators have the most recent information about the system conditions.

Recent works have shown that \gls{aoi}-based optimization may lead to sub-optimal performance, either because updates are unnecessary and may waste communication resources when the environment remains stable, or because an abrupt change in the system state may go unreported for a relatively long time, harming the control performance.
In more advanced systems, the communication policy is tailored to the \gls{voi} of the system, which implies that new transmissions are started whenever the environment evolution gets unpredictable at the receiver side.
A further improvement in overall performance is obtained by adapting the actuators' decisions to the network operations, thus jointly optimizing communication and control~\cite{mason2023multiagent}.
In the pursuit of this goal, it is possible to adopt the \gls{marl} paradigm~\cite{sutton2018reinforcement}, an extension of \gls{rl} to handle multi-agent scenarios.


In the case of \gls{marl} optimization, each actor (i.e., sensor or actuator) is controlled via a distinct \gls{rl} agent, which adapts its local actions towards the maximization of the overall performance.
The fact that multiple agents interact with and simultaneously adapt to the same environment leads to non-stationary training conditions, where estimating the optimal policy is not always guaranteed~\cite{busoniu2008comprehensive}.
Better solutions may be obtained by centralizing the control process in a single unit, which ensures perfect agent coordination at the cost of higher learning complexity and resource consumption.
However, centralized solutions require reliable, high-frequency updates, which may be not always feasible, especially in the case \gls{iot} scenarios, where energy and communication constraints make it challenging to achieve a full environment perception in real-time~\cite{lei2020deep}.

In the case of a distributed optimization of \glspl{cps}, the placement of the computational intelligence controlling the communication systems is a critical design choice. 
If \gls{rl} agents are installed at the sensors, it is possible to observe the current environment (or part of it) and update the actuators' knowledge as needed, in a \emph{push-based} fashion.
However, this solution may not be feasible if the sensors have limited computational and communication capabilities, as in the case of \gls{iot} networks.
The other option is to install the computational intelligence at the actuators, each of which is associated with \gls{rl} agent that has to both control the physical actions of the actuator itself and request new updates from the sensors.
In such a case, communication takes place in a \emph{pull-based} fashion~\cite{kosta2017age}, and decisions on when and whether to request an update are made without any information about the environment aside from the last state update and the elapsed time.

The pull-based problem was first presented in~\cite{monahan1982state} and recently addressed in~\cite{nam2021reinforcement}, where the authors consider two possible approaches for approximating the optimal policy: the first explicitly estimates the transition probability of the \gls{mdp} describing the system evolution, while the second jointly learns communication and control policies as part of a single model.
The push-based configuration has been investigated in several recent works on \gls{eff_com} systems~\cite{kim2019learning}, in which the actuators' policies must be adapted to the frequency of communication updates.
However, the joint optimization of \gls{eff_com} and control policies is still relatively unexplored, and, to the authors' knowledge, no studies have directly compared the push- and pull-based communication approaches and analyzed the interdependency between \gls{cps} optimization and the \gls{voi}. 

To overcome these limitations, this work presents an analytical model for optimizing and evaluating push- and pull-based strategies in \glspl{cps}.
We consider a simple \gls{cps} scenario with a single actuator, without local sensing capabilities, that is connected to a \gls{bs} via a constrained communication channel. 
Using the channel involves a significant cost but represents the only way for the actuator to observe the system state.
This design involves a trade-off between the minimization of the communication cost and the quality of control, which becomes less accurate as the transmission rate is reduced, in both the push- and pull-based versions.
Using this model, we analyze the advantages and drawbacks of each configuration, proving relevant results and showing that the push-based system, while having better performance at the optimum, is a PPAD-hard problem~\cite{deng2023complexity}.

\section{System Model}
\label{sec:system}

We consider a system model with a single Actuator that can perceive the environment only through the information provided by a \gls{bs}.
This latter has full access to the system state and, therefore, embodies the sensor network described in the introduction.
The goal of the model is to design and evaluate \gls{eff_com} strategies between the Actuator and the \gls{bs}.
In the rest of this section, we will first formalize the environment as an \gls{mdp}, and then characterize the possible solutions in the pull- and push-based communication scenarios. 


\subsection{Markov Decision Processes}
\label{subsec:rl}

We consider an \gls{mdp} defined by the tuple $\langle \set{S}, \set{A}, \mb{P}, r, \gamma \rangle$, where $\set{S}$ is the set of states, $\set{A}$ is the set of actions, $\mb{P}$ is the full transition matrix, $r$ is the reward function and $\gamma \in [0,1)$ is the discount factor.
Symbol $\mathbf{P}^a$ denotes the transition matrix associated with action $a \in \mathcal{A}$.
Each row of $\mathbf{P}^a \in \mathbb{R}^{|\set{S}|\times |\set{S}|}$ defines the distribution of the next state over $\set{S}$.
For it to be a valid distribution, we impose $P^a_{ss'}\geq 0 \ \forall a \in \mathcal{A}, (s, s')\in \mathcal{S}^2$ and $\sum_{s^\prime} P^a_{ss^\prime} = 1$.
The immediate reward for taking action $a$ in state $s$ and transitioning to state $s'$ is denoted by $r_{s,s'}^a$.

In our problem, the Actuator cannot directly observe the state but incurs a cost $c_t$ to obtain it from the \gls{bs}, which observes it at every step with no associated cost.
Hence, we can consider two possible scenarios:
in the pull-based communication scenario, the Actuator itself decides whether to request a state update from the \gls{bs}, which constitutes a passive actor in the system;
in the push-based communication scenario, the \gls{bs}  itself can decide whether to transmit the system state to the Actuator.
Notably, in the first case, we have a single agent installed at the Actuator, while, in the second case, we have two distinct agents that need to cooperate. 

In the simplest case, every sensing action has the same cost, and the communication action is $c_t \in \lbrace 0,1 \rbrace$.
When the state $s_t$ is not available to the Actuator, its \emph{a priori} belief distribution can be computed using the Markovian property of the system state evolution.
Knowing the last observed state $s_{t-k}$ and the vector $\mb{a}=\lbrace a_{t-k}, a_{t-k+1}, \ldots, a_{t-1}\rbrace$ containing the $k$ actions undertaken since then, as well as the transition matrix $\mb{P}$, we can determine the belief distribution $\bm{\theta}_{s_{t-k},k}^{\mb{a}}$:
\begin{equation}
 \bm{\theta}_{s_{t-k},k}^{\mb{a}}=\left(\prod_{\ell=t-k}^{t}\mb{P}^{a_{\ell}}\right)\mathbbm{1}_{s_{t-k}},\label{eq:belief}
\end{equation}
where $\mathbbm{1}_s$ is a one-hot column vector whose elements are all 0, except for the one corresponding to $s$, which is equal to 1.

To jointly optimize the control and communication policies, we need to solve the following problem: 
\begin{equation}
\label{eq:constrained_problem}
    \text{maximize } \mathbb{E} \left[ \sum_{t=0}^\infty \gamma^t r_t \right]\     \text{such that } \mathbb{E} \left[ \sum_{t=0}^\infty \gamma^t c_t \right] \leq C,
\end{equation}
where $C$ is the cumulative sampling cost that the system can tolerate and $r_t$ is the instantaneous reward at time $t$.
This problem can be seen as a constrained \gls{pomdp} where each action, or combination of actions by the \gls{bs} and Actuator, corresponds to the pair $(a_t, c_t)$.
By defining a fixed communication cost $\beta\in\mathbb{R}^+$ and using it as a dual parameter controlling the trade-off between the communication cost and the reward, we can reformulate \eqref{eq:constrained_problem} as an unconstrained \gls{pomdp}:
\begin{equation}
    \text{maximize } \mathbb{E}\left[ \sum_{t = 0}^\infty \gamma^t (r_t - \beta c_t) \right].
\end{equation}
In the following, we will denote values and functions related to the Actuator with the subscript $A$, while the \gls{bs} will be associated with the subscript $B$.

The pull-based communication scenario can then be modeled as a problem known as an \gls{acnomdp}~\cite{nam2021reinforcement}, in which the state $s_t$ is only available to the Actuator by means of a (costly) sensing action.
Specifically, the sensing action corresponds to a request to the \gls{bs}, which has perfect state information and can transmit it on demand.
If we fix the polling strategy, the resulting system can be treated as a \gls{pomdp} that can be solved optimally with point-based value iteration~\cite{pineau2003point} or other methods that consider the $\alpha$-vectors~\cite{smith2005point}.

The push-based problem is more common in the \gls{eff_com} literature and is often modeled as a Remote \gls{pomdp}~\cite{tung2021effective}.
In this case, communication is initiated by the \gls{bs}, which knows both the last transmission to the receiver $s_{t-k}$ and the current state $s_t$, along with the time since the last update $k$. 
Intuitively, the performance of a push-based system should improve with respect to the pull-based one, as the former can exploit information on the specific realization of the system state trajectory, rather than just on its statistics.
On the other hand, remote \glspl{pomdp} are multi-agent problems and may involve critical issues in terms of coordination between the \gls{bs} and Actuator, which will be explored in the following.

\setlength{\textfloatsep}{5pt}

\section{Analytical Solution}
\label{sec:method}

To solve the \gls{mdp}, we consider a model-based approach: any \gls{rl} agent interacting with the environment can compute optimal control and communication policies with complete knowledge of the reward function and the state transition model. This can also be accomplished with statistical learning~\cite{nam2021reinforcement}, i.e., by learning the transition probabilities of the underlying \gls{mdp} before solving the communication-constrained problem. 
This choice allows for an easy comparison of the policies, without any training issues, and the results can be applied directly to any properly trained agent.  

\subsection{Pull-based Communication}

To solve the pull-based communication problem optimally, we propose a modified \gls{pi} algorithm to learn the communication control policies jointly.
We recall that \gls{pi} always converges to the optimal solution in single-agent \glspl{mdp}, as the one considered in pull-based scenarios.

We formulate a new \gls{mdp} in which each time step is divided into two sub-steps: one for the control action and one for the communication action.
The state of this new \gls{mdp} is also expanded to the tuple $(s_{t-k},k)$, which is available to the agent.
The action space is then $\mathcal{A}$ in control sub-steps, and $\{0,1\}$ in communication sub-steps.
The agent then splits the policy into two parts:
\begin{itemize}
    \item $\pi_{A}(s_{t-k}, k)$, the control policy, which maps each state $(s_{t-k}, k) \in \mc{S} \times \mathbb{N}$ to an action $a \in \mc{A}$;
    \item $\Delta(s_{t-k})$, the communication policy, which maps each state $s_{t-k} \in \mc{S}$ to the time until the next update, $c \in \mathbb{N}^+$.
\end{itemize}

\begin{algorithm}[t]
\caption{Pull-Based Modified Policy Iteration}
\label{alg:pi}
\begin{algorithmic}[1]
\footnotesize

\Require $\mathbf{P},r,\beta$
\State Initialize $\mb{V}_{A}(s,k) \gets 0$, randomize $\bm{\pi}_{A}(s,k)$, $\bm{\Delta}$

\While {true}
    \For{$(s,k)\in \mathcal{S}\times\lbrace 0,...,T_{\max} \rbrace$}
        \State $V_{A}'(s,k)\gets$Update using~\eqref{eq:control_update}
    \EndFor
    \State $\mb{V}_{A}\gets \mb{V}_{A}'$\Comment{Value update step}

    \For{$(s,k)\in \mathcal{S}\times\lbrace 0,...,T_{\max} \rbrace$}
        \State $\pi_{A}'(s,k)\gets$Update using~\eqref{eq:control_improvement}
        \State $\Delta'(s)\gets$Update using~\eqref{eq:comm_improvement}
    \EndFor
    \If {$\bm{\pi}_{A}'=\bm{\pi}_{A}\wedge\bm{\Delta}'=\bm{\Delta}$}
        \State\Return $\bm{\pi}_{A}, \bm{\Delta}$ \Comment{Convergence}
    \Else
        \State $\bm{\pi}_{A},\bm{\Delta}\gets\bm{\pi}_{A}',\bm{\Delta}'$    \Comment{Policy improvement step}
    \EndIf
\EndWhile
\end{algorithmic}
\end{algorithm}

Since the policy $\bm{\pi}_{A}$ depends only on the last observed state, we can determine the vector $\mb{a}^{\bm{\pi}_{A}}_s$ containing the $k$ actions chosen by the Actuator after observing $s$.
Following the bootstrap principle, the modified \gls{pi} algorithm then updates the value of a control sub-state as follows:
\begin{equation}\label{eq:control_update}
\begin{aligned}
    V_{A}'(s,k) = \sum_{\mathclap{(s',s'')\in\mc{S}^2}} \theta^{\mb{a}^{\bm{\pi}_{A}}_s}_{s,k}(s')P^{\pi_{A}(s,k)}_{s',s''}\bigg[r_{s',s''}^{\pi_{A}(s,k)}\\
    +\gamma\left(\delta_{s,k} (V_{A}(s'',0)-\beta)+\left(1-\delta_{s,k}\right)V_{A}(s,k+1)\right)\bigg],\\
\end{aligned}
\end{equation}
where $\delta_{s,k}$ is a utility function equal to $1$ if $\Delta(s)=k+1$, i.e., if an update is requested after $k+1$ steps, and 0 otherwise.
We observe that the division into sub-steps does not affect the reward and the solution to the modified problem is also optimal for the original \gls{mdp}.
Therefore, we do not then need to compute the control sub-state values explicitly.

\begin{algorithm}[t]
\caption{Push-Based Alternate Policy Iteration}
\label{alg:itpi}
\begin{algorithmic}[1]
\footnotesize

\Require $\mathbf{P},r,\beta, \bm{\pi}_{B}^{(0)}$
\State Initialize $\bm{\pi}_{B}\gets\bm{\pi}_{B}^{(0)}$, randomize $\bm{\pi}_{A}$
\While {true}
    \State $\bm{\pi}_{A}'\gets$\Call{ControlPolicyIteration}{$\mathbf{P},r,\beta, \bm{\pi}_{B}$}
    \State $\bm{\pi}_{B}'\gets$\Call{CommunicationPolicyIteration}{$\mathbf{P},r,\beta, \bm{\pi}_{A}'$}
    \If {$\bm{\pi}_{A}'=\bm{\pi}_{A}\wedge\bm{\pi}_{B}'=\bm{\pi}_{B}$}
        \State\Return $\bm{\pi}_{A},\bm{\pi}_{B}$ \Comment{Convergence}
    \Else
        \State $\bm{\pi}_{A},\bm{\pi}_{B}\gets\bm{\pi}'_{A},\bm{\pi}'_{B}$
    \EndIf
\EndWhile
\end{algorithmic}
\end{algorithm}

To perform the policy improvement step, we modified the standard update rule by using a reward that considers the entire evolution of the Markov chain until the state is sampled again.
If we consider a state sequence $\mb{s}'$, which begins $k$ steps after state $s$ is observed, the belief $\Theta^{\bm{\pi}_{A}}_{s,k}(\mb{s}')$ follows from~\eqref{eq:belief}: 
\begin{equation}\label{eq:belief_policy}
 \Theta^{\bm{\pi}_{A}}_{s,k}(\mb{s}')=\theta_{s,k}^{\mb{a}_s^{\pi_A}}(s'(1))\prod_{\ell=1}^{\mc{L}(\mb{s}')-1}P^{\pi_{A}(s,\ell+k-1)}_{s'(\ell),s'(\ell+1)},
\end{equation}
where $\mc{L}(\cdot)$ is a function whose outcome is the dimensionality of the input. Hence, the control policy is improved as follows:
\begin{equation}\label{eq:control_improvement}
\begin{aligned}
\pi_{A}'(s,k) = &\argmax_{a\in\mc{A}}\quad \sum_{\mathclap{\mb{s}'\in \mc{S}^{\Delta(s)-k+1}}} \Theta^{\bm{\pi}_{A}}_{s,k}(\mb{s}')\Bigg[\sum_{{\ell=1}}^{{\Delta(s)-k}}\gamma^{\ell-1}
r_{s'(\ell),s'(\ell+1)}^{a_s^{\bm{\pi}_{A}}(\ell+k-1)} \\&+\gamma^{\Delta(s)-k} \left(V_{A}(s'(\Delta(s)-k+1),0) -\beta\right)\Bigg].
\end{aligned}
\end{equation}
Instead, the communication policy is updated as:
\begin{equation}\label{eq:comm_improvement}
\begin{aligned}
    \Delta'(s) = &\argmax_{n\in\mathbb{N}^+}\sum_{\mathclap{\mb{s}'\in\mc{S}^n}}\Theta^{\bm{\pi}_{A}}_{s,1}(\mb{s}')\Bigg[\sum_{\ell=1}^{n-1}\gamma^{\ell}r_{s'(\ell),s'(\ell+1)}^{\pi_{A}(s,\ell)}\\
    &+\gamma^n(V_{A}(s'(n),0)-\beta)+r_{s,s'(1)}^{\pi_{A}(s,0)}\Bigg].
\end{aligned}
\end{equation}
Using the whole sequence of intermediate steps in the update equations is essential to take into account that the states $(s,k) \ \forall \, k>0$ are non-Markovian, as the belief distribution depends on the action sequence taken since the last communication.
The pseudocode for the \gls{pi} scheme for pull-based communication is given in Alg.~\ref{alg:pi}.

\begin{figure*}
    \centering
    \subfloat[Average reward (pull-based, focused).\label{fig:r_pull_focused}]{
\begin{tikzpicture}

\definecolor{darkgray176}{RGB}{176,176,176}

\begin{axis}[
height = \sfheight,
width = \sfwidth,
colorbar horizontal,
colorbar style={at={(0,1.25)},anchor=south west,height=0.4cm},
colormap/viridis,
point meta max=2,
point meta min=0,
tick align=outside,
tick pos=left,
x grid style={darkgray176},
xlabel={$\beta$},
xmin=-0.5, xmax=20.5,
xtick style={color=black},
xtick={0,4,8,12,16,20},
xticklabels={0,0.4,0.8,1.2,1.6,2},
y dir=reverse,
y grid style={darkgray176},
ylabel={$d$},
ymin=-0.5, ymax=13.5,
ytick style={color=black},
ytick={1.1,4.1,7.1,10.1,13.1},
yticklabels={0.9,0.7,0.5,0.3,0.1}
]
\addplot graphics [includegraphics cmd=\pgfimage,xmin=-0.5, xmax=20.5, ymin=13.5, ymax=-0.5] {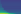};
\end{axis}

\end{tikzpicture}}\hfill
    \subfloat[Update frequency (pull-based, focused).\label{fig:c_pull_focused}]{
\begin{tikzpicture}

\definecolor{darkgray176}{RGB}{176,176,176}

\begin{axis}[
height = \sfheight,
width = \sfwidth,
colorbar horizontal,
colorbar style={at={(0,1.25)},anchor=south west,height=0.4cm},
colormap/viridis,
point meta max=1,
point meta min=0,
tick align=outside,
tick pos=left,
x grid style={darkgray176},
xlabel={$\beta$},
xmin=-0.5, xmax=20.5,
xtick style={color=black},
xtick={0,4,8,12,16,20},
xticklabels={0,0.4,0.8,1.2,1.6,2},
y dir=reverse,
y grid style={darkgray176},
ylabel={$d$},
ymin=-0.5, ymax=13.5,
ytick style={color=black},
ytick={1.1,4.1,7.1,10.1,13.1},
yticklabels={0.9,0.7,0.5,0.3,0.1}
]
\addplot graphics [includegraphics cmd=\pgfimage,xmin=-0.5, xmax=20.5, ymin=13.5, ymax=-0.5] {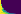};
\end{axis}

\end{tikzpicture}}\hfill
    \subfloat[Average reward (pull-based, spread).\label{fig:r_pull_spread}]{
\begin{tikzpicture}

\definecolor{darkgray176}{RGB}{176,176,176}

\begin{axis}[
height = \sfheight,
width = \sfwidth,
colorbar horizontal,
colorbar style={at={(0,1.25)},anchor=south west,height=0.4cm},
colormap/viridis,
point meta max=8,
point meta min=0,
tick align=outside,
tick pos=left,
x grid style={darkgray176},
xlabel={$\beta$},
xmin=-0.5, xmax=20.5,
xtick style={color=black},
xtick={0,4,8,12,16,20},
xticklabels={0,0.4,0.8,1.2,1.6,2},
y dir=reverse,
y grid style={darkgray176},
ylabel={$d$},
ymin=-0.5, ymax=13.5,
ytick style={color=black},
ytick={1.1,4.1,7.1,10.1,13.1},
yticklabels={0.9,0.7,0.5,0.3,0.1}
]
\addplot graphics [includegraphics cmd=\pgfimage,xmin=-0.5, xmax=20.5, ymin=13.5, ymax=-0.5] {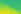};
\end{axis}

\end{tikzpicture}}\hfill
    \subfloat[Update frequency (pull-based, spread).\label{fig:c_pull_spread}]{
\begin{tikzpicture}

\definecolor{darkgray176}{RGB}{176,176,176}

\begin{axis}[
height = \sfheight,
width = \sfwidth,
colorbar horizontal,
colorbar style={at={(0,1.25)},anchor=south west,height=0.4cm},
colormap/viridis,
point meta max=1,
point meta min=0,
tick align=outside,
tick pos=left,
x grid style={darkgray176},
xlabel={$\beta$},
xmin=-0.5, xmax=20.5,
xtick style={color=black},
xtick={0,4,8,12,16,20},
xticklabels={0,0.4,0.8,1.2,1.6,2},
y dir=reverse,
y grid style={darkgray176},
ylabel={$d$},
ymin=-0.5, ymax=13.5,
ytick style={color=black},
ytick={1.1,4.1,7.1,10.1,13.1},
yticklabels={0.9,0.7,0.5,0.3,0.1}
]
\addplot graphics [includegraphics cmd=\pgfimage,xmin=-0.5, xmax=20.5, ymin=13.5, ymax=-0.5] {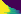};
\end{axis}

\end{tikzpicture}}\\
    \vspace{-7pt}
    \centering
    \subfloat[Average reward (push-based, focused).\label{fig:r_push_focused}]{
\begin{tikzpicture}

\definecolor{darkgray176}{RGB}{176,176,176}

\begin{axis}[
height = \sfheight,
width = \sfwidth,
colorbar horizontal,
colorbar style={at={(0,1.25)},anchor=south west,height=0.4cm},
colormap/viridis,
point meta max=2,
point meta min=0,
tick align=outside,
tick pos=left,
x grid style={darkgray176},
xlabel={$\beta$},
xmin=-0.5, xmax=20.5,
xtick style={color=black},
xtick={0,4,8,12,16,20},
xticklabels={0,0.4,0.8,1.2,1.6,2},
y dir=reverse,
y grid style={darkgray176},
ylabel={$d$},
ymin=-0.5, ymax=13.5,
ytick style={color=black},
ytick={1.1,4.1,7.1,10.1,13.1},
yticklabels={0.9,0.7,0.5,0.3,0.1}
]
\addplot graphics [includegraphics cmd=\pgfimage,xmin=-0.5, xmax=20.5, ymin=13.5, ymax=-0.5] {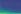};
\end{axis}

\end{tikzpicture}}\hfill
    \subfloat[Update frequency (push-based, focused).\label{fig:c_push_focused}]{
\begin{tikzpicture}

\definecolor{darkgray176}{RGB}{176,176,176}

\begin{axis}[
height = \sfheight,
width = \sfwidth,
colorbar horizontal,
colorbar style={at={(0,1.25)},anchor=south west,height=0.4cm},
colormap/viridis,
point meta max=1,
point meta min=0,
tick align=outside,
tick pos=left,
x grid style={darkgray176},
xlabel={$\beta$},
xmin=-0.5, xmax=20.5,
xtick style={color=black},
xtick={0,4,8,12,16,20},
xticklabels={0,0.4,0.8,1.2,1.6,2},
y dir=reverse,
y grid style={darkgray176},
ylabel={$d$},
ymin=-0.5, ymax=13.5,
ytick style={color=black},
ytick={1.1,4.1,7.1,10.1,13.1},
yticklabels={0.9,0.7,0.5,0.3,0.1}
]
\addplot graphics [includegraphics cmd=\pgfimage,xmin=-0.5, xmax=20.5, ymin=13.5, ymax=-0.5] {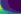};
\end{axis}

\end{tikzpicture}}\hfill
    \subfloat[Average reward (push-based, spread).\label{fig:r_push_spread}]{
\begin{tikzpicture}

\definecolor{darkgray176}{RGB}{176,176,176}

\begin{axis}[
height = \sfheight,
width = \sfwidth,
colorbar horizontal,
colorbar style={at={(0,1.25)},anchor=south west,height=0.4cm},
colormap/viridis,
point meta max=8,
point meta min=0,
tick align=outside,
tick pos=left,
x grid style={darkgray176},
xlabel={$\beta$},
xmin=-0.5, xmax=20.5,
xtick style={color=black},
xtick={0,4,8,12,16,20},
xticklabels={0,0.4,0.8,1.2,1.6,2},
y dir=reverse,
y grid style={darkgray176},
ylabel={$d$},
ymin=-0.5, ymax=13.5,
ytick style={color=black},
ytick={1.1,4.1,7.1,10.1,13.1},
yticklabels={0.9,0.7,0.5,0.3,0.1}
]
\addplot graphics [includegraphics cmd=\pgfimage,xmin=-0.5, xmax=20.5, ymin=13.5, ymax=-0.5] {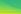};
\end{axis}

\end{tikzpicture}}\hfill
    \subfloat[Update frequency (push-based, spread).\label{fig:c_push_spread}]{
\begin{tikzpicture}

\definecolor{darkgray176}{RGB}{176,176,176}

\begin{axis}[
height = \sfheight,
width = \sfwidth,
colorbar horizontal,
colorbar style={at={(0,1.25)},anchor=south west,height=0.4cm},
colormap/viridis,
point meta max=1,
point meta min=0,
tick align=outside,
tick pos=left,
x grid style={darkgray176},
xlabel={$\beta$},
xmin=-0.5, xmax=20.5,
xtick style={color=black},
xtick={0,4,8,12,16,20},
xticklabels={0,0.4,0.8,1.2,1.6,2},
y dir=reverse,
y grid style={darkgray176},
ylabel={$d$},
ymin=-0.5, ymax=13.5,
ytick style={color=black},
ytick={1.1,4.1,7.1,10.1,13.1},
yticklabels={0.9,0.7,0.5,0.3,0.1}
]
\addplot graphics [includegraphics cmd=\pgfimage,xmin=-0.5, xmax=20.5, ymin=13.5, ymax=-0.5] {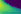};
\end{axis}

\end{tikzpicture}}
    \caption{Figure of the Reward and Communication Cost for different densities and values of $\beta$.}
    \label{fig:perf}
\end{figure*}

\subsection{Push-based Communication}

In the case of push-based communication, we propose a simple strategy which we call \gls{api}:
starting from a default policy for the \gls{bs}, we first optimize the Actuator's policy and then optimize the \gls{bs}, considering the Actuator's actions as fixed.
Since one agent's policy is fixed, the other agent sees a Markovian environment, and standard \gls{pi} can be applied in each step.
The procedure is then repeated until both the policies converge, i.e., each agent follows the optimal policy with respect to the other.
As we will prove in Sec.~\ref{sec:voi}, this strategy always converges in a finite number of steps.
The pseudocode for \gls{api} is given in Alg.~\ref{alg:itpi}.

\section{Numerical Evaluation}
\label{sec:numerical_evaluation}

In this section, we evaluate the two methods proposed on randomly generated \glspl{mdp} that follow a common structure for a fair and easy comparison.
We consider a system with $|\set{S}|=30$ states and $|\set{A}|=4$ actions. The reward depends only on the reached state and is defined as
\begin{equation}
    r_{s,s'}^a= e^{-\alpha|s'-s_0|},\  \forall s \in \set{S}, a \in \set{A},
\end{equation}
where $s_0 \in \set{S}$ is a target state, and $\alpha$ is a decay parameter determining how much the reward is concentrated around $s_0$.
We consider two possible reward configurations:
\begin{itemize}
    \item \textbf{Focused reward}: $\alpha=10$, the only state that gives a significant reward is $s_0$.
    \item \textbf{Spread reward}: $\alpha=0.01$, the reward is spread among the states near $s_0$.
\end{itemize}

In both configurations, the \gls{mdp} transition matrix is generated as follows: we start from a deterministic transition matrix for every action and then progressively increase the number of non-zero elements by enabling transitions to neighboring states with a certain probability.
As more transitions are added, the transition matrix becomes denser, i.e., the number $d$ of non-zero elements divided by the total number of entries of the matrix increases. The density of the deterministic matrix is simply $|\set{S}|^{-1}$.
If the initial deterministic matrix $\mb{P}^a$ has a transition from $s$ to $s'$, i.e., $P_{s,s'}^a=1$, the distribution at density $d$ is: 
\begin{equation}
    P_{s,s''}^a = \frac{4(d |\set{S}|-2|s'-s''|)}{(d |\set{S}| +1 )^2}, \ \forall s'' \ \text{s.t.} \ |s'-s''| < \frac{d |\set{S}|}{2}.
\end{equation}
According to this approach, the probability of transitioning to neighbor states decreases linearly with the distance from the original transition state (the distribution has a triangular shape). Higher-density \glspl{mdp} are then simply more unpredictable versions of the same initial model.

To investigate our model in different scenarios, we repeatedly increase the number of transitions for each state by adding two connections at a time, obtaining a total of 15 \glspl{mdp} with increasing densities.
We also consider different communication costs $\beta \in \{0,0.1,\ldots,2\}$.
We test each combination of $d$ and $\beta$ with both pull-based and push-based policies, using the same matrices for the push- and pull-based systems. The code for the numerical evaluations is available to replicate the results\footnote{\url{https://www.github.com/pietro-talli/Age-Value-CC}}.

\begin{figure*}
    \centering
    \subfloat[AoI distribution.\label{fig:aoi_dist}]{
\begin{tikzpicture}

\definecolor{darkgray176}{RGB}{176,176,176}
\definecolor{darkorange25512714}{RGB}{255,127,14}
\definecolor{steelblue31119180}{RGB}{31,119,180}

\begin{axis}[
height = 0.21\linewidth,
width = 0.28\linewidth,
tick align=outside,
tick pos=left,
x grid style={darkgray176},
xlabel={Peak AoI},
xmin=-0.05, xmax=9.05,
xtick style={color=black},
y grid style={darkgray176},
ylabel={PMF},
xtick={0.5,2.5,4.5,6.5,8.5},
xticklabels={0,2,4,6,8},
ymin=0, ymax=0.6,
xmajorgrids,
ymajorgrids,
ytick style={color=black}
]
\draw[draw=none,fill=steelblue31119180,fill opacity=0.5] (axis cs:0,0) rectangle (axis cs:1,0);
\addlegendimage{ybar,ybar legend,draw=none,fill=steelblue31119180,fill opacity=0.5}
\addlegendentry{Pull}

\draw[draw=none,fill=steelblue31119180,fill opacity=0.5] (axis cs:1.2,0) rectangle (axis cs:1.45,0.0126968004062976);
\draw[draw=none,fill=steelblue31119180,fill opacity=0.5] (axis cs:2.2,0) rectangle (axis cs:2.45,0.436826364200666);
\draw[draw=none,fill=steelblue31119180,fill opacity=0.5] (axis cs:3.2,0) rectangle (axis cs:3.45,0.541335139100502);
\draw[draw=none,fill=steelblue31119180,fill opacity=0.5] (axis cs:4.2,0) rectangle (axis cs:4.45,0.00914169629253428);
\addlegendimage{ybar,ybar legend,draw=none,fill=darkorange25512714,fill opacity=0.5}
\addlegendentry{Push}

\draw[draw=none,fill=darkorange25512714,fill opacity=0.5] (axis cs:1.55,0) rectangle (axis cs:1.8,0.470063766153152);
\draw[draw=none,fill=darkorange25512714,fill opacity=0.5] (axis cs:2.55,0) rectangle (axis cs:2.8,0.0918119744935387);
\draw[draw=none,fill=darkorange25512714,fill opacity=0.5] (axis cs:3.55,0) rectangle (axis cs:3.8,0.150386547034592);
\draw[draw=none,fill=darkorange25512714,fill opacity=0.5] (axis cs:4.55,0) rectangle (axis cs:4.8,0.148129338073472);
\draw[draw=none,fill=darkorange25512714,fill opacity=0.5] (axis cs:5.55,0) rectangle (axis cs:5.8,0.0718356751876305);
\draw[draw=none,fill=darkorange25512714,fill opacity=0.5] (axis cs:6.55,0) rectangle (axis cs:6.8,0.0309237627673382);
\draw[draw=none,fill=darkorange25512714,fill opacity=0.5] (axis cs:7.55,0) rectangle (axis cs:7.8,0.0166469160882569);
\draw[draw=none,fill=darkorange25512714,fill opacity=0.5] (axis cs:8.55,0) rectangle (axis cs:8.8,0.00818238248405846);
\end{axis}

\end{tikzpicture}}
    \subfloat[Pull-based AoI distribution by state.\label{fig:pull_aoi}]{
\begin{tikzpicture}

\definecolor{darkgray176}{RGB}{176,176,176}

\begin{axis}[
height = 0.21\linewidth,
width = 0.28\linewidth,
colorbar,
colorbar style={ylabel={}},
colormap/viridis,
point meta max=1,
point meta min=0,
tick align=outside,
tick pos=left,
x grid style={darkgray176},
xlabel={State},
xmin=-0.5, xmax=29.5,
xtick={0,9,19,29},
xticklabels={1,10,20,30},
xtick style={color=black},
y dir=reverse,
y grid style={darkgray176},
ylabel={Peak AoI},
ymin=0.5, ymax=9.5,
ytick style={color=black}
]
\addplot graphics [includegraphics cmd=\pgfimage,xmin=-0.5, xmax=29.5, ymin=9.5, ymax=-0.5] {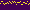};
\end{axis}

\end{tikzpicture}}
    \subfloat[Push-based AoI distribution by state.\label{fig:push_aoi}]{
\begin{tikzpicture}

\definecolor{darkgray176}{RGB}{176,176,176}

\begin{axis}[
height = 0.21\linewidth,
width = 0.28\linewidth,
colorbar,
colorbar style={ylabel={}},
colormap/viridis,
point meta max=1,
point meta min=0,
tick align=outside,
tick pos=left,
x grid style={darkgray176},
xtick={0,9,19,29},
xticklabels={1,10,20,30},
xlabel={State},
xmin=-0.5, xmax=29.5,
xtick style={color=black},
y dir=reverse,
y grid style={darkgray176},
ylabel={Peak AoI},
ymin=0.5, ymax=9.5,
ytick style={color=black}
]
\addplot graphics [includegraphics cmd=\pgfimage,xmin=-0.5, xmax=29.5, ymin=9.5, ymax=-0.5] {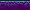};
\end{axis}

\end{tikzpicture}}
    \caption{Peak AoI distribution for $d=0.1$, $\beta=1$ and sparse reward.}
    \label{fig:PeakAOI}
\end{figure*}

Fig.~\ref{fig:perf} shows the average obtained reward of the physical process and the frequency of the updates between the \gls{bs} and the Actuator. First, we can note that the push-based system always obtains a higher overall reward than the pull-based one, as would be expected: the \gls{bs}, acting with full knowledge of the state, can determine whether an update has a high \gls{voi}, i.e., whether it meaningfully affects the Actuator.

Fig.~\ref{fig:c_pull_focused} and Fig.~\ref{fig:c_pull_spread} also show that the pull-based system only has $2$ working points for higher values of $d$ (i.e., more unpredictable transitions): either the agent requests an update at each step, or it never does.
This is because the high uncertainty in the belief distribution makes it hard to use it to predict the optimal action even after a single step.
This leads the Actuator to always request the new state information if $\beta$ is low, i.e., if the relative \gls{voi} of the update increases, or never to request updates if $\beta$ is high, moving blindly. 

As Fig.~\ref{fig:c_push_focused} and Fig.~\ref{fig:c_push_spread} clearly show, the push-based strategy can reduce the update frequency.
In this case, the \gls{bs} can check if the state of the system evolves in the predicted way, transmitting an update to the Actuator only when a low-probability transition occurs and its knowledge becomes obsolete. This is particularly beneficial if the transition model is more predictable, i.e., for lower values of $d$, and leads to $3$ working regimes: firstly, if $d$ is very high, the \gls{bs} communicates only if the reward in the next step changes significantly, letting the Actuator move blindly in all other cases. This behavior is particularly noticeable in the focused reward case when the \gls{bs} only communicates if the state can lead directly to the target.

On the other hand, the \gls{bs} will also communicate sporadically if the system is very predictable, i.e., for very low values of $d$, for the reasons we outlined above. In between these two extremes, the \gls{bs} will communicate more often, as the system is not so random that the belief distribution of the Actuator quickly becomes useless, but also not too predictable.
As this type of behavior requires the full knowledge of the state, it cannot be obtained in a pull-based scenario.

To better highlight the different behaviors between pull- and push-based communication, we look at Fig.~\ref{fig:PeakAOI}, which analyzes the Peak \gls{aoi} for a highly predictable \gls{mdp} with a sparse reward, with $d=0.1$ and $\beta=1$.
In particular, Fig.~\ref{fig:aoi_dist} shows the \gls{pmf} of the Peak \gls{aoi} for the two approaches: while the average update frequency is similar, the push-based system has a much higher variance than the pull-based one.
This is because the first approach triggers transmissions when needed, sometimes after a single step, sometimes after many, depending on the environment's evolution; instead, the pull-based approach only relies on the last observed state, and its decisions are always deterministic.
This is confirmed by Fig.~\ref{fig:pull_aoi}, which shows how, in the pull-based scenarios, the Peak \gls{aoi} distribution for each state is concentrated in a single point.
On the other hand, the push-based system may transmit after a long time even when starting from states that have a high expected \gls{voi} after a single step, as Fig.~\ref{fig:push_aoi} shows: while the pull-based system transmits after $1$ step when starting from state $2$, the push-based system has a non-zero probability of waiting $5$ or more steps before transmitting again.

\section{Age and Value of Information in Effective Communication}\label{sec:voi}
Concepts like \gls{aoi} and \gls{voi} are crucial to analyze the timeliness and relevance of system status updates.
In our scenario, we might consider an update strategy that balances the frequency of updates (which determines the average communication cost) with the reward obtained by the Actuator. Let $J(s)=\mathbb{E}\left[\sum_{t=0}^\infty \gamma^tr_t \mid s_0=s\right]$ be the long-term reward of the Actuator starting from state $s$ and $\beta$ be the communication cost.
We can then give a more general definition of performance in pull-based or scheduled systems as a function of the selected update periods $\bm{\Delta} \in (\mathbb{N}^+)^{|\mc{S}|}$:
\begin{equation}
\label{eq:tot_r}
R_{\beta}(\bm{\Delta})=\!\sum_{\mathclap{s \in \mc{S}}} \phi(s)\!\left(\! J(s|\bm{\Delta})\!-\!\E{\sum_{n=0}^\infty\beta\gamma^{n\Delta(s')}P(s'|s,\bm{\Delta})\!}\right)\!,
\end{equation}
where $\phi(s)$ is the stationary state distribution induced by the control policy and the update periods $\bm{\Delta}$. 

We introduce the concept of Pareto dominance~\cite{pareto1919manuale} to compare the performance of complex schemes with multiple objectives.
\begin{definition}
an $n$-dimensional tuple $\bm{\eta}=(\eta_1,\ldots,\eta_n)$ \emph{Pareto dominates} $\bm{\eta}'$ (which we denote as $\bm{\eta}\succeq\bm{\eta}'$) if and only if each element of $\bm{\eta}$ is equal to or better than the corresponding element of $\bm{\eta}'$, i.e., $\bm{\eta}\succeq\bm{\eta}'\iff\eta_j\geq\eta_j'\ \forall  j$. Strict Pareto dominance makes the inequality strict for at least one parameter:
\begin{equation}
\bm{\eta}\succ\bm{\eta}'\iff\bm{\eta}\succeq\bm{\eta}'\wedge\exists i: \eta_i > \eta_i'.
\end{equation}    
\end{definition}
This concept can then be extended to multi-objective optimization, i.e., to schemes that may have different parameters and multiple performance metrics.
\begin{definition}
Let us consider two schemes $x$ and $y$, parameterized by a vector $\bm{\theta}_x$ and $\bm{\theta}_y$. Scheme $x$ Pareto dominates $y$ if and only if:
\begin{equation}
x\succeq y\iff\forall\bm{\theta}_y\exists\bm{\theta}_x:\bm{\eta}(\bm{\theta}_x)\succeq\bm{\eta}(\bm{\theta}_y),
\end{equation}
where $\bm{\eta}(\bm{\theta}_z)$ is the $n$-dimensional performance vector associated with scheme $z$ and parameters $\bm{\theta}_z$. The definition of strict dominance between schemes is analogous.
\end{definition}

If we only consider the \gls{aoi}, without including state information in the optimization, we can envision a periodic policy in which the \gls{bs} sens a new update every $\Delta\in \mathbb{N^+}$ steps.
In this case, the optimal update interval $\Delta^*_{\text{AoI}}$ is then:
\begin{equation}
    \Delta^*_{\text{AoI}}(\beta) = \argmax_{\Delta \in \mathbb{N}^+} \sum_{s \in \mc{S}} \phi(s) \left(J(s \mid \Delta)-\sum_{n=0}^\infty\beta\gamma^{n\Delta}\right).
\end{equation}
In particular, we have that $\Delta(s)=\Delta^*_{\text{AoI}} \ \forall s\in\mc{S}$, i.e., the optimal \gls{aoi} $\Delta^*_{\text{AoI}}$ is the same for all states.

An adaptive pull-based strategy selecting the best $\Delta_{\text{pull}}(s)$ based on the state $s\in \mc{S}$ received in the last update can improve the reward function in \eqref{eq:tot_r} will respect the \gls{aoi}-based approach.
In this case, the optimization problem then becomes:
\begin{equation}
    \mb{\Delta}_{\text{pull}}^*(\beta) = \argmax_{\mb{\Delta}\in (\mathbb{N}^+)^{|\mc{S}|}} R_{\beta}(\bm{\Delta}).
\end{equation}

\begin{theorem}\label{th:aoi_pull}
The pull-based \gls{eff_com} strategy Pareto dominates the \gls{aoi}-based policy, i.e., $\bm{\Delta}^*_{\text{pull}}(\beta)\succeq\Delta^*_{\text{AoI}}(\beta)\,\forall\langle \set{S}, \set{A}, \mb{P}, r, \gamma,\beta \rangle$.
\end{theorem}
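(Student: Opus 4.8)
The plan is to prove the statement via a ``special case'' argument: the \gls{aoi}-based scheme is exactly the pull-based scheme restricted to update-period vectors that are \emph{constant} across states, so the pull optimum, taken over a strictly larger feasible set, cannot be worse, and every operating point of the \gls{aoi} scheme is reproduced by the pull scheme.

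First I would spell out the reduction. For a scalar $\Delta\in\mathbb{N}^+$ let $\bar{\bm{\Delta}}=\Delta\,\mathbbm{1}$ be the vector with $\Delta(s)=\Delta$ for all $s$; this is a valid element of $(\mathbb{N}^+)^{|\mc{S}|}$. I would then check that $R_\beta(\bar{\bm{\Delta}})$, as defined in \eqref{eq:tot_r}, equals the quantity maximized in the definition of $\Delta^*_{\text{AoI}}(\beta)$: the control term $J(s\mid\bar{\bm{\Delta}})=J(s\mid\Delta)$ is unchanged, and the cost term collapses because, $\Delta(s')$ no longer depending on $s'$, $\E{\sum_{n\ge 0}\beta\gamma^{n\Delta(s')}P(s'\mid s,\bar{\bm{\Delta}})}=\sum_{n\ge 0}\beta\gamma^{n\Delta}\sum_{s'}P(s'\mid s,\bar{\bm{\Delta}})=\sum_{n\ge 0}\beta\gamma^{n\Delta}$, while $\phi$ is the stationary distribution under the constant schedule, exactly as in the \gls{aoi} objective. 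This identifies $\{\bm{\eta}(\Delta\,\mathbbm{1}):\Delta\in\mathbb{N}^+\}$, the set of operating points reachable by the \gls{aoi} scheme, as a subset of $\{\bm{\eta}(\bm{\Delta}):\bm{\Delta}\in(\mathbb{N}^+)^{|\mc{S}|}\}$, the one reachable by the pull scheme.

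Given this, the conclusion follows quickly. For every instance $\langle\mc{S},\mc{A},\mb{P},r,\gamma,\beta\rangle$ the vector $\Delta^*_{\text{AoI}}(\beta)\,\mathbbm{1}$ is feasible for the pull optimization, so
\begin{equation}
R_\beta\big(\bm{\Delta}^*_{\text{pull}}(\beta)\big)=\max_{\bm{\Delta}\in(\mathbb{N}^+)^{|\mc{S}|}}R_\beta(\bm{\Delta})\ \ge\ R_\beta\big(\Delta^*_{\text{AoI}}(\beta)\,\mathbbm{1}\big),
\end{equation}
which is already the claimed dominance when performance is the scalarized return $R_\beta$. For the multi-objective reading of Definition~2, with performance vector $\bm{\eta}=(\text{reward},-\text{cost})$, the same inclusion makes the \gls{aoi} achievable set a subset of the pull achievable set; hence for any \gls{aoi} parameter $\bm{\theta}_y$ there is a pull policy reproducing that operating point, and hence one that weakly Pareto-dominates it, which is precisely $\forall\bm{\theta}_y\,\exists\bm{\theta}_x:\bm{\eta}(\bm{\theta}_x)\succeq\bm{\eta}(\bm{\theta}_y)$.

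I expect the delicate point to be this last step rather than the inclusion, which is essentially definitional: one must argue that, as $\beta$ is swept and $R_\beta$ is maximized over the larger pull class, the resulting Pareto frontier lies weakly above the \gls{aoi} frontier \emph{pointwise}, not merely that it attains a larger scalarized value at each $\beta$. The cleanest route I see is to work with the achievable-performance sets directly: since they are nested and finite (for a finite MDP), every \gls{aoi}-efficient point is pull-achievable and is therefore weakly dominated by a pull-efficient one, so Definition~2 is met by choosing $\bm{\theta}_x$ to be that pull-efficient policy. Some care is also needed in the reduction to ensure the stationary distribution and the discounted-cost bookkeeping in \eqref{eq:tot_r} truly coincide with those of the \gls{aoi} objective, but this is routine once $\Delta(\cdot)$ is taken constant.
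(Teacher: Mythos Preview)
Your proposal is correct and uses the same core idea as the paper: the \gls{aoi} scheme is the pull scheme restricted to constant update vectors, so optimizing over the larger feasible set can only do at least as well. The paper phrases this as a two-line \emph{reductio ad absurdum} (if $\Delta^*_{\text{AoI}}$ beat the pull optimum at some $\beta$, the constant vector $\Delta^*_{\text{AoI}}\mathbbm{1}$ would contradict optimality of $\bm{\Delta}^*_{\text{pull}}$), while you give the direct inclusion argument; these are logically equivalent, and your treatment of the reduction in \eqref{eq:tot_r} and of the multi-objective reading of Definition~2 is in fact more careful than the paper's own proof.
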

\begin{proof}
First, we can trivially show that a scheme working better than another for every value of $\beta$ is a stronger condition than Pareto dominance. We can then prove that this condition holds by \emph{reductio ad absurdum}: we consider a hypothetical optimal interval $\Delta^*_{\text{AoI}}$ which performs better than pull-based \gls{eff_com} for a given value of $\beta$. In this case, $\bm{\Delta}^*_{\text{pull}}$ cannot be optimal, as the vector in which all elements are equal to $\Delta^*_{\text{AoI}}$ is one of the possible choices for pull-based \gls{eff_com}.
\end{proof}

We observe that the pull-based strategy can always fall back to the same point as the \gls{aoi}-based one, simply by considering the same interval for every state.
In this case, the \gls{voi} is implicitly determined by the strategy: the Actuator only requests an update if the expected increase in the long-term reward is larger than the communication cost $\beta$, and that increase is precisely the value of the update.

We can further improve performance by adopting a push-based approach, allowing the \gls{bs} may be able to independently observe the system state $s_t$ and consequently decide whether to update the Actuator's knowledge.
This prevents us from defining a fixed state-dependent update period, as the policy $\pi_{B}(s,k,s_t)$ depends on the current state $s_t$ as well as on the last state update $s_{t-k}$ and the elapsed time $k$.
Particularly, the value function for the \gls{bs} is:
\begin{equation}
\begin{gathered}
    V_{B}(s,k,s')=\pi_{B}(s,k,s')(V_{B}(s',0,s')\!-\!\beta)+(1\!-\!\pi_{B}(s,k,s'))\\
    \times\sum_{s''\in\mc{S}}P^{\pi_{A}(s,k)}_{s's''}\left(r_{s',s''}^{\pi_{A}(s,k)}+\gamma V_{B}(s,k+1,s'')\right),
\end{gathered}
\end{equation}
and the optimal policy is the one maximizing the communication value.

The discovery of the optimal policies through \gls{api} can be seen as the solution to a Markov game~\cite{wang2002reinforcement}: the two agents act as players in a game where the moves are the possible policies and the payoff for each player is the expected reward in the initial state.
\begin{theorem}\label{th:nash}
The \gls{api} approach leads to a \gls{ne} policy in the push-based problem.
\end{theorem}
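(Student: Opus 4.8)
The plan is to read \gls{api} as a monotone coordinate-ascent procedure on the common payoff of the underlying Markov game, and then to combine that monotonicity with the finiteness of the policy space — first to prove that the algorithm terminates, and then to identify the pair it returns as a mutual best response, i.e., a \gls{ne}.

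First I would make precise that the push-based problem is a common-payoff (team) Markov game: the Actuator and the \gls{bs} receive the same discounted return $\mathbb{E}[\sum_{t=0}^\infty \gamma^t (r_t - \beta c_t)]$, so a joint policy is a \gls{ne} exactly when neither agent can raise that return by changing only its own policy. When one agent's policy is held fixed, the other faces a finite single-agent \gls{mdp} over the augmented state $(s_{t-k},k)$ with $k \le T_{\max}$, and \gls{pi} on it returns a policy optimal against the fixed opponent whose value function dominates, \emph{componentwise}, that of every other policy against the same opponent. Writing $(\pi_A^{(n)},\pi_B^{(n)})$ for the pair at the start of iteration $n$ and $W_n$ for its value function, I would then chain the two \gls{pi} calls of Alg.~\ref{alg:itpi}: \textsc{ControlPolicyIteration}$(\cdot,\pi_B^{(n)})$ gives $\pi_A^{(n+1)}$ with $V^{\pi_A^{(n+1)},\pi_B^{(n)}}\succeq W_n$, and \textsc{CommunicationPolicyIteration}$(\cdot,\pi_A^{(n+1)})$ gives $\pi_B^{(n+1)}$ with $W_{n+1}=V^{\pi_A^{(n+1)},\pi_B^{(n+1)}}\succeq V^{\pi_A^{(n+1)},\pi_B^{(n)}}\succeq W_n$, so $\{W_n\}$ is non-decreasing componentwise.

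Next I would establish termination. Since $\mc{S}$, $\mc{A}$ and the admissible-delay set $\{0,\dots,T_{\max}\}$ are finite, there are finitely many joint policies and hence finitely many value functions, so the non-decreasing chain $\{W_n\}$ stabilises after finitely many steps. Under the standard tie-breaking convention for \gls{pi} (an agent keeps its current action at a state unless a strictly better one exists, so \gls{pi} returns the incumbent when it is already optimal against the opponent), any change of the joint policy between consecutive iterations forces a strict increase of $W$ at some state; hence the joint policy changes only finitely often, after which $\pi_A^{(n+1)}=\pi_A^{(n)}$, $\pi_B^{(n+1)}=\pi_B^{(n)}$, and the stopping test of Alg.~\ref{alg:itpi} fires — this is exactly the finite-time convergence claimed in Sec.~\ref{sec:method}. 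For the returned pair $(\pi_A^\star,\pi_B^\star)$, $\pi_A^\star$ is optimal against $\pi_B^\star$ (it is the output of the Actuator \gls{pi} with $\pi_B^\star$ fixed) and $\pi_B^\star$ is optimal against $\pi_A^\star$, so no unilateral deviation can raise the common return and the pair is a \gls{ne}.

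I expect the termination step to be the crux: componentwise monotonicity of $\{W_n\}$ alone does not exclude the algorithm cycling through distinct, equal-value policies, and ruling this out needs both the finiteness of the policy space and the strict-improvement property of \gls{pi} under a fixed tie-breaking rule. Once convergence is secured, identifying the fixed point with a \gls{ne} is immediate from the definition of best response in a common-payoff game. A secondary point worth stating explicitly is the restriction $k \le T_{\max}$, which keeps the augmented state space — and hence the policy space — finite; without it one would first need an argument that the optimal responses depend on $k$ only through a bounded horizon.
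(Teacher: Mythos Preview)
Your proposal is correct. It follows essentially the same logic as the paper, but the two presentations differ in packaging: the paper observes that, since both agents share the same reward, the expected long-term return is an exact potential function, so the push-based problem is a finite exact potential game; \gls{api} is then identified with \gls{ibr}, and convergence to a \gls{ne} in finitely many steps is obtained by citing the classical result of Monderer and Shapley for finite potential games. You instead unpack that citation by hand: your componentwise monotonicity of $\{W_n\}$ is precisely the potential-function improvement step, and your finiteness-plus-tie-breaking argument is the standard proof that \gls{ibr} terminates in such games. What your route buys is a self-contained argument that does not require the reader to know the potential-game literature, together with explicit attention to two hygiene points the paper leaves implicit (the tie-breaking rule in \gls{pi} and the $k\le T_{\max}$ truncation that keeps the policy space finite); what the paper's route buys is brevity and a direct link to a well-known general framework.
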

\begin{proof}
Firstly, we can trivially prove that the considered Markov game is an exact potential game~\cite{monderer1996potential}: as the reward for the two agents is the same, the expected long-term reward is a potential function for the game.
We then consider the \gls{api} strategy: each round of the iterated algorithm leads to the optimal policy when the strategy of the other agent is given, due to the optimality of standard \gls{pi}. The \gls{api} algorithm is then an \gls{ibr} scheme for the game, which leads to a \gls{ne} in a finite number of steps in all finite potential games~\cite{monderer1996potential}.
\end{proof}

However, reaching an \gls{ne} is \emph{not} a guarantee of Pareto optimality: games may have multiple \glspl{ne}, and finding the optimal one is PPAD-hard~\cite{deng2023complexity}. The push-based approach may be actively harmful, even with respect to an \gls{aoi} policy.

\begin{theorem}\label{th:push_pull}
The optimal \gls{eff_com} solution to the push-based problem, $\bm{\pi}^*_{B,\text{push}}$, Pareto dominates the pull-based \gls{eff_com} strategy:
\begin{equation}
 \bm{\pi}^*_{B,\text{push}}(\beta)\succeq\bm{\Delta}^*_{\text{pull}}(\beta)\succeq\Delta^*_{\text{AoI}}(\beta)\ \forall\langle \set{S}, \set{A}, \mb{P}, r, \gamma,\beta \rangle.
\end{equation}
However, the solution obtained by the \gls{api} strategy, $\bm{\pi}^{\text{API}}_{B,\text{push}}$, does not Pareto dominate the \gls{aoi}-based strategy:
\begin{equation}
\exists\langle \set{S}, \set{A}, \mb{P}, r, \gamma,\beta \rangle: \bm{\pi}^{\text{API}}_{B,\text{push}}(\beta)\nsucceq\Delta^*_{\text{AoI}}(\beta).
\end{equation}
\end{theorem}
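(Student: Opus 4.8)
The plan is to dispatch the two assertions separately, since they call for very different arguments: the dominance chain follows from Theorem~\ref{th:aoi_pull} plus a one-line policy-embedding argument, whereas the claimed failure of \gls{api} needs an explicit instance on which iterated best response gets trapped in a sub-optimal equilibrium.

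For the chain, the inequality $\bm{\Delta}^*_{\text{pull}}(\beta)\succeq\Delta^*_{\text{AoI}}(\beta)$ is precisely Theorem~\ref{th:aoi_pull}, so only $\bm{\pi}^*_{B,\text{push}}(\beta)\succeq\bm{\Delta}^*_{\text{pull}}(\beta)$ is new, and I would prove it by the same \emph{reductio}: every pull schedule $\bm{\Delta}$ with its associated control policy is feasible in the push problem, realised by letting the \gls{bs} ignore the observed state and transmit exactly when $k+1=\Delta(s)$. This ``blind'' \gls{bs} policy induces a closed-loop process that is stochastically identical to the pull-based one, hence attains the same discounted reward and cost, so $R_\beta(\bm{\pi}^*_{B,\text{push}})\ge R_\beta(\bm{\Delta}^*_{\text{pull}})$ for every $\beta$; by the criterion already used in Theorem~\ref{th:aoi_pull}, doing at least as well for every $\beta$ gives Pareto dominance, and composing the two inequalities closes the chain.

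For the negative part, recall from Theorem~\ref{th:nash} that \gls{api} is iterated best response in an exact potential game whose potential is the common long-term reward; it therefore converges to a \emph{local} --- not necessarily global --- maximiser of that potential, i.e.\ to some \gls{ne}. The trap I would exploit is this: if, against a \gls{bs} that never transmits ($\bm{\pi}_B=\bm{0}$), the Actuator's optimal control policy is a \emph{constant} action $\bm{\pi}_A^{0}\equiv a_0$, then $(\bm{\pi}_A^{0},\bm{0})$ is automatically a \gls{ne} --- the Actuator plays $a_0$ whatever its belief, so the true-state trajectory and hence the team value are insensitive to transmissions, and the \gls{bs}'s best response to $\bm{\pi}_A^{0}$ is to save the cost and stay silent. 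Hence \gls{api} initialised with $\bm{\pi}_B^{(0)}=\bm{0}$ returns $(\bm{\pi}_A^{0},\bm{0})$ after a single round, with reward $\rho_0=\sum_s\phi(s)J(s\mid\bm{0})$ and zero communication cost. If the instance is chosen so that some finite periodic schedule is strictly worthwhile --- so that $\Delta^*_{\text{AoI}}$ attains reward $\rho_{\text{AoI}}>\rho_0$ --- then the reward coordinate of the performance vector already fails, giving $\bm{\pi}^{\text{API}}_{B,\text{push}}(\beta)\nsucceq\Delta^*_{\text{AoI}}(\beta)$.

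A concrete realisation is a ``target'' \gls{mdp}: a state $s_0$ carrying the reward, a ``stay'' action $a_1$ that self-loops deterministically at $s_0$ but drives every other state into a zero-reward absorbing trap, and a ``mixing'' action $a_0$ under which the chain wanders and visits $s_0$ only occasionally. Against a silent \gls{bs} the belief is too spread to ever justify the trap risk of $a_1$, so the blind-optimal policy is the constant $a_0$; with periodic updates, however, ``play $a_1$ when the last update reports $s_0$, else $a_0$'' accumulates strictly more reward, so $\Delta^*_{\text{AoI}}$ is finite for small $\beta$. The main obstacle is exactly this calibration --- selecting $|\set{S}|$, the mixing rate of $a_0$, $\gamma$ and $\beta$ so that simultaneously (i) $a_0$ is optimal at \emph{every} belief reachable without updates, so that $\bm{\pi}_A^{0}$ is genuinely constant and $(\bm{\pi}_A^{0},\bm{0})$ is a fixed point of \gls{api}, and (ii) the periodic policy beats $\rho_0$ by more than its update cost. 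I would then verify the two value functions by direct computation, or numerically as in Sec.~\ref{sec:numerical_evaluation}, and exhibit the smallest such instance.
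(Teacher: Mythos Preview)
Your treatment of the dominance chain is correct and matches the paper's: embed every pull schedule as a state-blind push policy and invoke Theorem~\ref{th:aoi_pull}.

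For the negative part, your mechanism is also the paper's --- start \gls{api} from $\bm{\pi}_B^{(0)}=\bm{0}$, argue that the Actuator's best response is a constant action so that silence is in turn the \gls{bs}'s best response, and then exhibit a periodic schedule that beats this \gls{ne}. The gap is in your concrete instance. With a deterministic self-loop of $a_1$ at the rewarding state $s_0$, the belief at $(s_0,0)$ is the Dirac mass on $s_0$, and there $a_1$ yields the maximal value $r_{s_0}/(1-\gamma)$; no choice of mixing rate, $|\mc{S}|$, or $\gamma$ can make $a_0$ preferable at that belief, so your condition~(i) is unsatisfiable and $\bm{\pi}_A^0$ cannot be the constant $a_0$. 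Worse, against the actual best response $\bm{\pi}_A^0$ (which plays $a_1$ whenever the last update was $s_0$), the silent \gls{bs} policy is \emph{not} a best response: whenever the true state hits $s_0$ while the Actuator's last update is some $s\neq s_0$, a single transmission lets the Actuator lock in reward forever, a gain of order $1/(1-\gamma)$ that exceeds any fixed $\beta$. Hence \gls{api} does not halt after one round in your construction, and the intended fixed point is not an equilibrium.

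The paper's counterexample sidesteps exactly this by having \emph{no} absorbing reward: action $a_1$ drives a deterministic cycle $0\to1\to2\to3\to0$ through the rewarding transition, while $a_2$ offers a risky shortcut from $0$ that lands in $1$ or in a trap state $4$ with equal probability (escapable via $a_2$). The blind-optimal policy from state $0$ is then the constant $a_1$ (the safe cycle), and since a constant policy ignores all updates, silence is trivially the \gls{bs}'s best response; the periodic \gls{aoi} policy that resolves the $a_2$ gamble beats this \gls{ne} for small $\beta$. The correction to your plan is therefore structural rather than a matter of calibration: information must be valuable only \emph{transiently}, to choose among comparable branches of a recurrent process, never to reach an absorbing jackpot that the \gls{bs} would always want to announce.
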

\begin{proof}
 The proof of the first statement is simple, and follows the proof of Theorem~\ref{th:aoi_pull}: as the \gls{bs} can act with full knowledge of the state, the pull-based solution is a possible solution to the push-based problem, and in some cases, better solutions exist. The knowledge of the realization of the state can improve the reward, either by cutting unnecessary transmissions or by improving the actuator's performance.

As finding the optimal solution to a Markov game is PPAD-hard~\cite{deng2023complexity}, no polynomial-time algorithm can reliably find $\bm{\pi}^*_{B,\text{push}}$. We can then give a counterexample to prove the second part of the theorem: we consider a simple \gls{mdp} with $5$ states and $2$ actions, whose evolution is depicted in Fig.~\ref{fig:markov_model}.
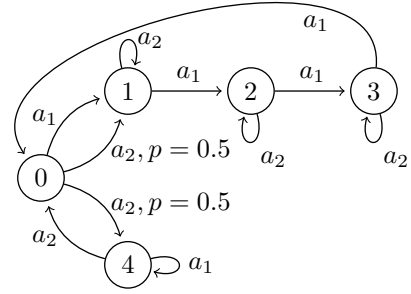
\begin{figure}
    \centering
    \vspace{-0.7cm}
\begin{tikzpicture}[->, shorten >=2pt, line width=0.5 pt, node distance =1 cm]
                        ]
\node (n0)  [circle,draw] {$0$};
\node (n4)  [circle,draw,below right=of n0] {$4$};
\node (n1)  [circle,draw,above right=of n0] {$1$};
\node (n2)  [circle,draw, right=of n1] {$2$};
\node (n3)  [circle,draw, right=of n2] {$3$};

\path (n0) edge [bend left] node [right]{$a_2,p=0.5$} (n4);
\path (n0) edge [bend right] node [right]{$a_2,p=0.5$} (n1);
\path (n0) edge [bend left] node [left]{$a_1$} (n1);
\path (n1) edge node [above]{$a_1$} (n2);
\path (n2) edge node [above]{$a_1$} (n3);
\path (n3) edge [bend right=110] node [below,near start]{$a_1$} (n0);
\path (n4) edge [loop right] node [right]{$a_1$} (n4);
\path (n4) edge [bend left] node [left]{$a_2$} (n0);

\path (n2) edge [loop below] node [below right]{$a_2$} (n2);
\path (n3) edge [loop below] node [below right]{$a_2$} (n3);
\path (n1) edge [loop above] node [right]{$a_2$} (n1);

\end{tikzpicture}
    \caption{Example of a Markov model with $5$ states and $2$ actions.}
    \label{fig:markov_model}
\end{figure}
The two transition matrices corresponding to $a_1$ and $a_2$ are:
\begin{equation}
\mb{P}^{a_1} = {
\footnotesize\begin{bmatrix}
        0 & 1 & 0 & 0 & 0 \\
        0 & 0 & 1 & 0 & 0 \\
        0 & 0 & 0 & 1 & 0 \\
        1 & 0 & 0 & 0 & 0 \\
        0 & 0 & 0 & 0 & 1
    \end{bmatrix}},
\mb{P}^{a_2} ={
\footnotesize\begin{bmatrix}
        0 & 0.5 & 0 & 0 & 0.5 \\
        0 & 1 & 0 & 0 & 0 \\
        0 & 0 & 1 & 0 & 0 \\
        0 & 0 & 0 & 1 & 0 \\
        1 & 0 & 0 & 0 & 0
    \end{bmatrix}}.
\end{equation}
The reward is then always 0, except for when the environment transits to state 0, i.e., we have $r_{s,0}^a=1$ and $r_{s,s'}^a=0\ \forall s'\neq0$.

We can easily see that taking action $a_1$ in state $0$ leads to a loop with $4$ states, while taking action $a_2$ may lead to a shorter path back to the reward-giving state.
We consider a policy $\Delta_{\text{AoI}}$ that transmits in each odd step, ensuring that the actuator always knows if it lands in state $1$ or state $4$ after taking action $a_2$.
Its expected long-term reward is:
\begin{equation}
\begin{aligned}
 R(\Delta_{\text{AoI}})=&1-\gamma\beta+\frac{\gamma^2 R(\Delta_{\text{AoI}})}{2}-\frac{\gamma^3\beta}{2}+\frac{\gamma^4 R(\Delta_{\text{AoI}})}{2}\\
 =&\frac{2-(2\gamma+\gamma^3)\beta}{2(1-\gamma^2-\gamma^4)}.
\end{aligned}
 \end{equation}

Considering a push-based approach and applying the \gls{api} strategy, the final results depend on the initial policy of the \gls{bs}.
If the process starts from a policy that communicates often, e.g., one that always communicates the state, the algorithm will converge to the optimum joint policy, which only communicates if it deviates from the short cycle (i.e., if the system ends up in state 1).
In this case, the expected reward is
\begin{equation}
\begin{aligned}
 R(\bm{\pi}^*_{B,\text{push}})=&1-\gamma\beta+\frac{\gamma^2 R(\bm{\pi}^*_{B,\text{push}})}{2}+\frac{\gamma^4 R(\bm{\pi}^*_{B,\text{push}})}{2}\\
 =&\frac{2-\gamma\beta}{2(1-\gamma^2-\gamma^4)},
\end{aligned}
\end{equation}
which is better than the \gls{aoi} policy for any value of $\beta\in\mathbb{R}^+$ and $\gamma\in(0,1)$.

Instead, if the \gls{api} strategy starts from a policy that \emph{never} communicates, the actuator will take the conservative choice, and always take action $a_1$.
This is another \gls{ne} of the system, as the \gls{bs} should never communicate if the actuator's policy is independent of the state.
The reward for this solution is:
\begin{equation}
\begin{aligned}
 R(\bm{\pi}^{\text{API}}_{B,\text{push}})=&1+\gamma^4 R(\bm{\pi}^{\text{API}}_{B,\text{push}})=(1-\gamma^4)^{-1}.
\end{aligned}
\end{equation}
In this case, the \gls{api} solution is not Pareto dominant, as it performs worse than a simple \gls{aoi}-based strategy under the following conditions:
\begin{equation}
 \beta<\frac{2}{(2+\gamma^2)(1-\gamma^4)}.
\end{equation}
\end{proof}

The optimal policy might not be easy to obtain in more complex systems, and if we consider previously unknown cases in which \gls{pi} must be replaced by \gls{rl}, running multiple training procedures to achieve the optimal \gls{ne} might be expensive or impossible.
Despite this evidence, most of the literature on \gls{voi} has so far focused on \emph{push-based} solutions~\cite{gunduz2023timely}, which fall prey to this coordination problem.

\section{Conclusion}
\label{sec:conclusion}

This work analyzes \gls{eff_com} strategies in \glspl{cps}, considering the performance of the control system as the \gls{voi} and analyzing the interactions between the communication and control policies. We propose an analytical framework to adapt classical optimization tools such as \gls{pi} to this context, providing numerical results that show a strong dependency between the \gls{voi} and the structure of the underlying \gls{mdp}. Less predictable \glspl{mdp} and sparser rewards tend to lead to a higher value of updates, although there are interesting patterns in the interaction between communication and control.

Finally, our analysis revealed that the common push-based view of communication and control problems may not always be optimal, as the game theoretical properties of multiagent scenarios do not guarantee that a solution may be better even than a simple \gls{aoi}-based optimization. Future extensions of this work will concentrate on this conundrum, analyzing more complex examples of remote \glspl{pomdp} and devising heuristic strategies that provide good performance in practical scenarios.
\bibliographystyle{IEEEtran}
\bibliography{biblio.bib}

\end{document}